\newlength\figwidth
\journalname{Nonlinear Dynamics}
\begin{document}

\title{Breaking a chaotic image encryption algorithm based on perceptron model}

\author{Yu Zhang  \and Chengqing Li \and Qin Li \and Dan Zhang \and Shi Shu}

\authorrunning{Y. Zhang et al.}

\institute{Yu Zhang, Shi Shu\at
               School of Mathematics and Computational Science,
               Xiangtan University, Xiangtan 411105, Hunan, China \\
               \and
Chengqing Li \at
              College of Information Engineering,
              Xiangtan University, Xiangtan 411105, Hunan, China \\
              Tel.: +86-731-52639779\\
              Fax: +86-731-58292217\\
              \email{chengqingg@gmail.com}
           \and
Qin Li  \at
               College of Information Engineering,
               Xiangtan University, Xiangtan 411105, Hunan, China \\
\and
Dan Zhang \at
College of Computer Science and Technology, Zhejiang University,
Hangzhou 310027, Zhejiang, China
}

\date{Received: Nov 5 2011}

\maketitle

\begin{abstract}
Recently, a chaotic image encryption algorithm based on perceptron model was proposed.
The present paper analyzes security of the algorithm and finds that the equivalent secret key can
be reconstructed with only one pair of known-plaintext/ciphertext, which is supported by both
mathematical proof and experiment results. In addition, some other security defects are also reported.

\keywords{Chaos \and Cryptanalysis \and Known-plaintext attack \and Perceptron model \and Lorenz system}

\end{abstract}

\section{Introduction}
\label{sec:intro}

The usage of chaos for image encryption has been received intensive attention in the past decade.
This is mainly spurred by the following three points: 1) increasing importance of security of image data as it is
transmitted over all kinds of networks with a more and more higher frequency; 2) low efficiency of the traditional text encryption algorithms like DES
with respect to protecting image data due to some intrinsic features of images such as bulk data capacity, high redundancy,
strong correlation among adjacent pixels, etc; 3) some fundamental features of the chaotic dynamical systems such as ergodicity,
mixing property, sensitivity to initial conditions/system parameter can be considered
analogous to some ideal cryptographic properties such as confusion, diffusion, avalanche properties. According to the
record of \textit{Web of Science}, more than five hundreds of articles on chaos-based image encryption algorithms
have been published in the past fifteen years \cite{YaobinMao:CSF2004}. Some other related papers consider video, audio (speech) or text as encryption
objects \cite{ZQChen:H263:ND10,Jye:SpeechEncrypt:ND11,ChenJY:Joint:TCSII11}.

Roughly speaking, the usage of chaos in designing digital symmetric encryption
algorithms can be categorized as three classes: 1) creating position permutation relations directly or indirectly; 2) generating
pseudo-random bit sequence (PRBS) controlling composition and combination of some basic encryption operations; 3) producing ciphertext
directly by assigning plaintext as initial conditions or control parameters of a chaos system. As native opposite of cryptography, some cryptanalysis work was
also developed and some chaos-based encryption algorithms are found to be not secure of different extents from the viewpoint of
modern cryptology \cite{Ercan:AttackLorenz:TCASII04,AlvarezLi:Rules:IJBC2006,Rhouma:BreakLian:PLA08,Alvarez:BreakingNCA:CNSNS09,LCQ:BreakHCKBA:IJBC11,ZhouJT:Coder:TCSI11}. Due to owning complex dynamical properties, artificial neural network was combined with chaos to design symmetric and public encryption algorithms \cite{Guodh:probabilistic:AI99,Mislovaty:Publiccipher:PRL03,
Zhouts:clippedneural:LNCS04,Leungkc:Chnn:NPL05,LSG:Blockcipher:NC09}. Unfortunately, some of them are found to be equivalent to
much simper form in terms of essential structure and can be broken easily \cite{Li:AttackingCNN2004,Li:AttackingISNN2005}.

In \cite{Wang:perceptron:ND10}, a chaotic image encryption algorithm based on perceptron model, a type of artificial neural network invented in 1957, was
proposed, where two PRBSs, generated from quantized orbit of Lorenz system under given secret key, is used to control the perceptron model to
output cipher-image from the input of plain-image. However, we proved that the seeming complex image encryption algorithm
is equivalent to a stream cipher of exclusive or (XOR) operation. So, it can be easily broken with only one pair of plain-image and
cipher-image. In addition, some other security defects about secret key, insensitive of cipher-image with respect to plain-image and
low randomness of the used PRBS are reported.

The rest of the paper is organized as follows. The next section briefly introduces the
proposed image encryption algorithm. Section~\ref{sec:cryptanalyze} presents detailed cryptanalysis on the encryption algorithm
with experiment results and some other security defects. The last section concludes the paper.

\section{The proposed image encryption algorithm}
\label{sec:algorithm}

The plaintext encrypted by the image encryption algorithm under study is a gray-scale digital image.
Without loss of generality, the plaintext can be represented as a one-dimensional $8$-bit integer sequences $P=\{p_n\}_{n=0}^{N-1}$
by scanning it in a raster order. Correspondingly, the ciphertext is denoted by $P'=\{p'_n\}_{n=0}^{N-1}$. The kernel of the encryption algorithm is based on a
threshold function
\begin{equation*}
f(x)=
\begin{cases}
1, & \text{if } x\ge 0,\\
0, & \text{otherwise,}
\end{cases}
\end{equation*}
which was considered by the proposers as simple variant of a single layer
perceptron model who inputs $m$ variables, $s_0, s_1, \cdots, s_{m-1}$, and outputs $m$ ones by calculating
\begin{equation*}
    g(s_i)=
    \begin{cases}
    1, & \text{if } \left(\sum\limits_{j=0}^{m-1} s_iw_{ij}-\theta_i \right)\geq 0,\\
    0, & \text{otherwise},
    \end{cases}
\end{equation*}
where $w_{ij}$ denotes the weight of the $i$-th input for the $j$-th neuron, $\theta_i$ is the threshold of the $i$-th neuron,
and $i=0\sim m-1$. With these preliminary introduction, the image encryption algorithm under study can be described as follows\footnote{To make the presentation more concise and complete, some notations in the original paper \cite{Wang:perceptron:ND10} are modified, and some details about the algorithm are also amended under the condition that its security property is not influenced.}.

\begin{itemize}
\item \textit{The secret key}: initial state $(x_0^*, y_0^*, z_0^*)$ and the step length $h$ of an approximation method solving the Lorenz system
\begin{equation}
\left\{
\begin{array}{lcl}
$\.{x}$ &=& ay-ax, \\
$\.{y}$ &=& cx-xz-y, \\
$\.{z}$ &=& xy-bz,
\end{array}
\right.
\label{eq:lorenz}
\end{equation}
under fixed control parameters $(a, b, c)=(10, \frac{8}{3}, 28)$.

\item \textit{The initialization procedures}:

\par 1) in double-precision floating-point arithmetic, solve the Lorenz system (\ref{eq:lorenz}) with the fourth-order Runge-Kutta method of step $h$ iteratively
3001 times from $(x_0^*, y_0^*, z_0^*)$ and obtain the current approximation state $(x_0, y_0, z_0)$ .

\par 2) run the above solution approximation step seven more times from the current approximation state to get $\{(x_j, y_j, z_j)\}_{j=1}^7$, and set
\begin{equation*}
    w_{j} =
    \begin{cases}
    1, & \text{if } (x_j-\check{x})/(\hat{x}-\check{x}) \geq 0.5,\\
    -1, & \text{otherwise},
    \end{cases}
    \label{eq:weight1}
\end{equation*}
and
\begin{equation*}
   \widetilde{w}_{j} =
    \begin{cases}
    1, & \text{if } (y_j-\check{y})/(\hat{y}-\check{y}) \geq 0.5,\\
    -1, & \text{otherwise},
    \end{cases}
    \label{eq:weight2}
\end{equation*}
for $j=0\sim 7$, where $\hat{x}=\max(\{x_j\}_{j=0}^7)$, $\check{x}=\min(\{x_j\}_{j=0}^7)$, $\hat{y}=\max(\{y_j\}_{j=0}^7)$, $\check{y}=\min(\{y_j\}_{j=0}^7)$.

\par 3) reset the current approximation state of the Lorenz
system as
\begin{equation*}
\left\{
\begin{array}{lcl}
x_0 & = & \check{x}+(x_8-\check{x})\left(\sum\limits_{j=0}^{7}(w_j+1) \cdot 2^{j-1} \oplus r\right)/256, \\
y_0 & = & \check{y}+(y_8-\check{y})\left(\sum\limits_{j=0}^{7}(\widetilde{w}_j+1) \cdot 2^{j-1} \oplus r\right)/256, \\
z_0 & = & z_8,
\end{array}
\right.
\end{equation*}
where $r=\lfloor(z_8- \lfloor z_8 \rfloor)\cdot 256\rfloor$.

\par 4) repeat the above two steps $N-1$ times and get two sequences
$\{w_k\}_{k=0}^{8N-1}$ and $\{\widetilde{w}_k\}_{k=0}^{8N-1}$.

\item \textit{The encryption procedure}:
for the $n$-th plain-byte $p_n=\sum_{i=0}^7 p_{n,i}\cdot 2^i$, obtain the
corresponding cipher-byte $p'_n=\sum_{i=0}^7 p'_{n,i}\cdot 2^i$ by
calculating
\begin{equation}
    p'_{n,i} =
    \begin{cases}
    f(p_{n,i} w_k + c_k \widetilde{w}_k- \theta_k), & \text{if } w_k =1,\\
    f(p_{n,i} w_k - c_k \widetilde{w}_k+ \theta_k), & \text{otherwise},
    \end{cases}
\label{eq:encryption}
\end{equation}
where $c_k=-w_k/2$,
$\theta_k =((w_k+1)/2)\oplus ((\widetilde{w}_k+1)/2)$ and $k=8\cdot n+i$.

\item \textit{The decryption procedure} is the same as the encryption one except that the locations of $p_{n,i}$ and $p'_{n,i}$ in the encryption function
(\ref{eq:encryption}) are swapped.
\end{itemize}

\section{Cryptanalysis}
\label{sec:cryptanalyze}

\subsection{Known-plaintext attack}

It is well-known that any detail of an encryption algorithm, except the secret key, should be public. This is called
Kerckhoffs's principle, which was reformulated by Claude Shannon as Shannon's maxim, ``The enemy knows the algorithm."
Under this principle, the known-plaintext attack is a cryptanalysis model
where the attacker can access some samples of both the plaintext, and the corresponding
ciphertext (encrypted version). The objective of the model is to reveal some (even all) information about the secret key,
which is then used to decrypt other ciphertext encrypted with the same secret key. Note that feasibility of the known-plaintext attack
is based on repeated usage of secret key, namely impracticability of one-time pad, which is caused by the following three
problems: 1) impossibility of software source of perfectly random bits; 2) secure generation and exchange of the one-time pad material
of not shorter length than that of the plaintext; 3) complex secret key management preventing the secret key is reused in whole or part.

Strength of any encryption algorithm against the known-plaintext attack is one of the most important factors evaluating
its security. Unfortunately, we found that the image encryption algorithm under study can be broken with only one pair of
plaintext and the corresponding ciphertext.

\begin{theorem}
For $n=0\sim N-1$ and $i=0\sim 7$,
\begin{equation}
p'_{n,i} =
\begin{cases}
p_{n,i},            &   \mbox{if } w_k =1,\\
\overline{p_{n,i}}, &   \mbox{otherwise},
\end{cases}
\label{eq:simplify}
\end{equation}
where $k=8n+i$, $\overline{x}=(x\oplus 1)$.
\label{theorem:kpa}
\end{theorem}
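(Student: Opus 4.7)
The plan is a finite case analysis: since $w_k, \widetilde{w}_k \in \{-1,1\}$ and $p_{n,i} \in \{0,1\}$, there are only eight combinations to check, and the claim is that in every one of them the threshold test collapses to either the identity or the bitwise complement of $p_{n,i}$. The goal is to show that the $\widetilde{w}_k$-dependence in the encryption formula silently cancels against the corresponding $\widetilde{w}_k$-dependence in $\theta_k$.

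First I would split on $w_k$. When $w_k = 1$, we have $c_k = -1/2$, so the relevant branch of (\ref{eq:encryption}) gives the argument $p_{n,i} + c_k \widetilde{w}_k - \theta_k = p_{n,i} - \widetilde{w}_k/2 - \theta_k$. I would then sub-case on $\widetilde{w}_k$: if $\widetilde{w}_k = 1$ then $\theta_k = 1 \oplus 1 = 0$ and the argument becomes $p_{n,i} - 1/2$; if $\widetilde{w}_k = -1$ then $\theta_k = 1 \oplus 0 = 1$ and the argument becomes $p_{n,i} + 1/2 - 1 = p_{n,i} - 1/2$. Either way the argument is $p_{n,i} - 1/2$, which is $\ge 0$ iff $p_{n,i} = 1$, so $f(\cdot) = p_{n,i}$.

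Next, when $w_k = -1$, $c_k = 1/2$, and the other branch gives the argument $-p_{n,i} - \widetilde{w}_k/2 + \theta_k$. A symmetric sub-case on $\widetilde{w}_k$: if $\widetilde{w}_k = 1$ then $\theta_k = 0 \oplus 1 = 1$ and the argument is $-p_{n,i} - 1/2 + 1 = 1/2 - p_{n,i}$; if $\widetilde{w}_k = -1$ then $\theta_k = 0$ and the argument is $-p_{n,i} + 1/2 = 1/2 - p_{n,i}$. In both sub-cases the argument is $\ge 0$ iff $p_{n,i} = 0$, so $f(\cdot) = \overline{p_{n,i}}$. Combining the two top-level cases yields (\ref{eq:simplify}).

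There is no real obstacle; the only thing to be careful about is bookkeeping the definition $\theta_k = ((w_k+1)/2) \oplus ((\widetilde{w}_k+1)/2)$ correctly, noting that $(w_k+1)/2$ maps $\{-1,1\}$ to $\{0,1\}$ so that the XOR is well-defined, and observing that the two sign flips introduced by switching $\widetilde{w}_k$ (one in $c_k \widetilde{w}_k$ and one in $\theta_k$) exactly compensate. Once this cancellation is identified, the theorem is immediate and exposes the algorithm as a bitwise XOR stream cipher driven only by $\{w_k\}$, which is the structural point the rest of the cryptanalysis will exploit.
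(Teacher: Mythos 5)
Your proof is correct and takes essentially the same route as the paper's: an exhaustive finite case analysis of the threshold function. The only difference is organizational — the paper enumerates the four $(w_k, p_{n,i})$ combinations and then sub-cases on $\widetilde{w}_k$, while you keep $p_{n,i}$ symbolic and show the argument collapses to $p_{n,i}-\tfrac{1}{2}$ (resp.\ $\tfrac{1}{2}-p_{n,i}$), which exhibits the cancellation of the $\widetilde{w}_k$-dependence a bit more transparently but is otherwise the same computation.
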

\begin{proof}
To proof the theorem, we study the four possible combination of $(w_k, p_{n,i})$ as follows.
\begin{itemize}
\item $(w_k, p_{n,i})=(1, 1)$:
\begin{eqnarray*}
p'_{n,i} & = & f( p_{n,i}\cdot w_k+ c_k\cdot \widetilde{w}_k- \theta_k) \\
         & = & f( 1 + (-1/2) \cdot \widetilde{w}_k- \theta_k) \\
         & = &
         \begin{cases}
         f( 1 -1/2- 0),   & \text{if } \widetilde{w}_k =1,\\
         f( 1 +1/2-1),   & \text{otherwise}
         \end{cases}  \\
         & = & f(1/2) \\
         & = & 1.
\end{eqnarray*}

\item $(w_k, p_{n,i})=(1, 0)$:
\begin{eqnarray*}
p'_{n,i} &=& f( p_{n,i}\cdot w_k+ c_k\cdot \widetilde{w}_k- \theta_k) \\
         &=& f( 0 + (-1/2)\cdot \widetilde{w}_k- \theta_k) \\
         &=&
            \begin{cases}
            f( 0 -1/2-0), & \text{if } \widetilde{w}_k =1,\\
            f( 0 +1/2-1), & \text{otherwise}
            \end{cases} \\
            &=& f(-1/2) \\
            &=& 0.
\end{eqnarray*}

\item $(w_k, p_{n,i})=(-1, 1)$:
\begin{eqnarray*}
p'_{n,i} &=& f(p_{n,i}\cdot w_k - c_k\cdot \widetilde{w}_k+ \theta_k) \\
         &=& f( -1 - 1/2\cdot \widetilde{w}_k+ \theta_k) \\
         &=&   \begin{cases}
         f( -1 -1/2 +1), & \text{if } \widetilde{w}_k =1,\\
         f( -1 +1/2 +0), & \text{otherwise}
         \end{cases} \\
         &=& f(-1/2) \\
         &=& 0.
\end{eqnarray*}

\item $(w_k, p_{n,i})=(-1, 0)$:
\begin{eqnarray*}
    p'_{n,i} &=& f(p_{n,i}\cdot w_k - c_k\cdot \widetilde{w}_k+ \theta_k) \\
            &=& f( 0 - 1/2\cdot \widetilde{w}_k+ \theta_k) \\
            &=&
            \begin{cases}
            f( 0 -1/2 +1), & \text{if } \widetilde{w}_k =1,\\
            f( 0 +1/2 +0), & \text{otherwise}
            \end{cases}\\
            &=& f(1/2) \\
            &=& 1.
\end{eqnarray*}
\end{itemize}

Combining the above four cases, one has
\begin{equation*}
p'_{n,i} =
    \begin{cases}
    1, & \text{if } p_{n,i}=1 \text{ and } w_k =1,\\
    0, & \text{if } p_{n,i}=0 \text{ and } w_k =1,\\
    0, & \text{if } p_{n,i}=1 \text{ and } w_k =-1,\\
    1, & \text{if } p_{n,i}=0 \text{ and } w_k =-1,
    \end{cases}
\end{equation*}
which means
\begin{equation*}
p'_{n,i} =
\begin{cases}
p_{n,i},             & \text{if } w_k =1,\\
\overline{p_{n,i}},  & \text{otherwise},
\end{cases}
\end{equation*}
Thus, the theorem is proved.\qed
\end{proof}

Observe Theorem~\ref{theorem:kpa}, one has
\begin{equation}
p'_{n,i} = p_{n,i} \oplus \overline{w'_k},
\label{eq:moresimpler}
\end{equation}
where $w'_k=(w_k+1)/2$. From Eq.~(\ref{eq:moresimpler}), one can get
\begin{equation*}
(p_n\oplus p'_n)=\eta_n,
\end{equation*}
where $\eta_n=\sum_{i=0}^7 \overline{w'_{8n+i}}\cdot 2^i$
and $n=0\sim N-1$. For any another cipher-image $Q'=\{q'_n\}_{n=0}^{N-1}$ encrypted with the same secret key, one can easily reveal the corresponding
plain-image $Q=\{q_n\}_{n=0}^{N-1}$ by calculating
\begin{equation*}
q_n= q'_n\oplus \eta_n
\end{equation*}
for $n=0\sim N-1$, which means
$H=\{\eta_n\}_{n=0}^{N-1}$ can work as equivalent secret key.

To verify performance of the above attack, some experiments were made. Figure~\ref{figure:knownplaintext}
shows a plain-image ``Lenna" of size $512 \times 512$ and the encryption result with the secret key $(x_0^*, y_0^*, z_0^*)=(1, 1, 0)$, and $h=10^{-1}$. A mask image $H$ is constructed by XORing datum of Fig.~\ref{figure:knownplaintext}a) and Fig.~\ref{figure:knownplaintext}b)
byte by byte, and shown in Fig.~\ref{figure:attack}a). The mask image is then used to decrypt another cipher-image shown in Fig.~\ref{figure:attack}b)
and the result is shown in Fig.~\ref{figure:attack}c), which is identical with the original plain-image ``Baboon".

\begin{figure}[!htb]
\centering
\begin{minipage}{\figwidth}
\centering
\includegraphics[width=\textwidth]{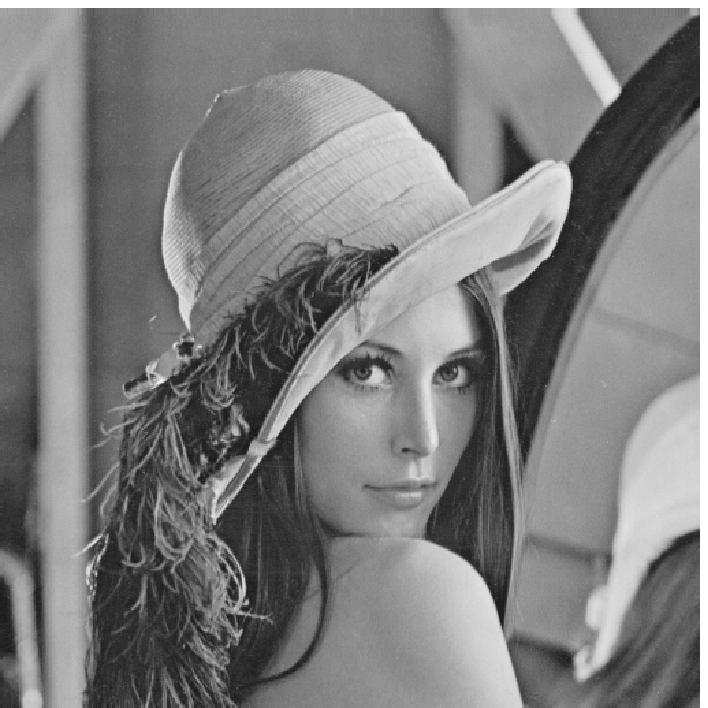}
a)
\end{minipage}
\begin{minipage}{\figwidth}
\centering
\includegraphics[width=\textwidth]{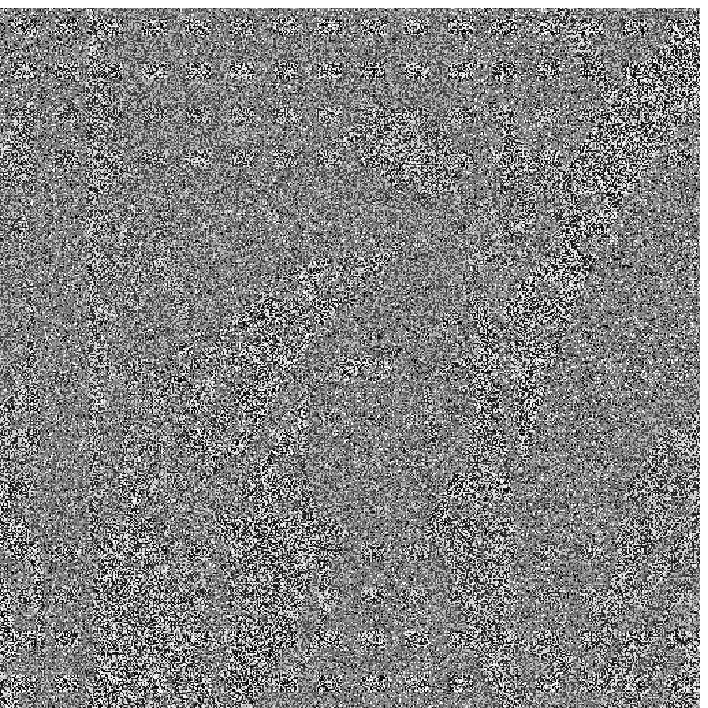}
b)
\end{minipage}
\caption{One known plain-image and the corresponding cipher-image: a) known plain-image ``Lenna", b) cipher-image of Fig.~\ref{figure:knownplaintext}a).}
\label{figure:knownplaintext}
\end{figure}

\begin{figure}[!htb]
\centering
\begin{minipage}{\figwidth}
\centering
\includegraphics[width=\textwidth]{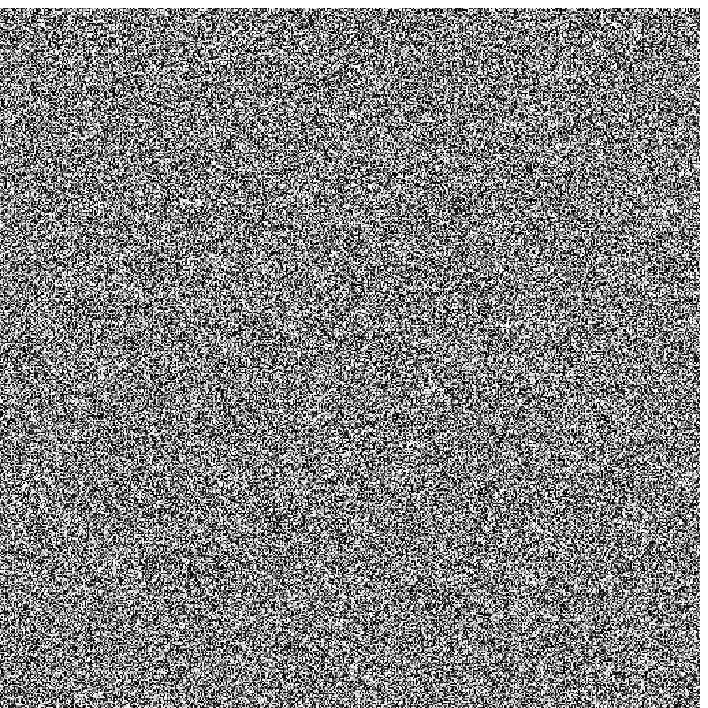}
a)
\end{minipage}\\
\begin{minipage}{\figwidth}
\centering
\includegraphics[width=\textwidth]{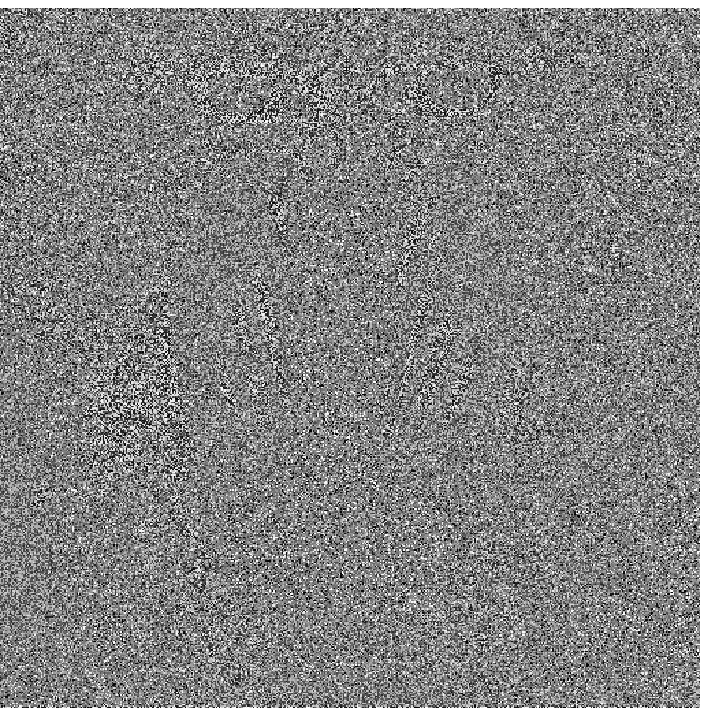}
b)
\end{minipage}
\begin{minipage}{\figwidth}
\centering
\includegraphics[width=\textwidth]{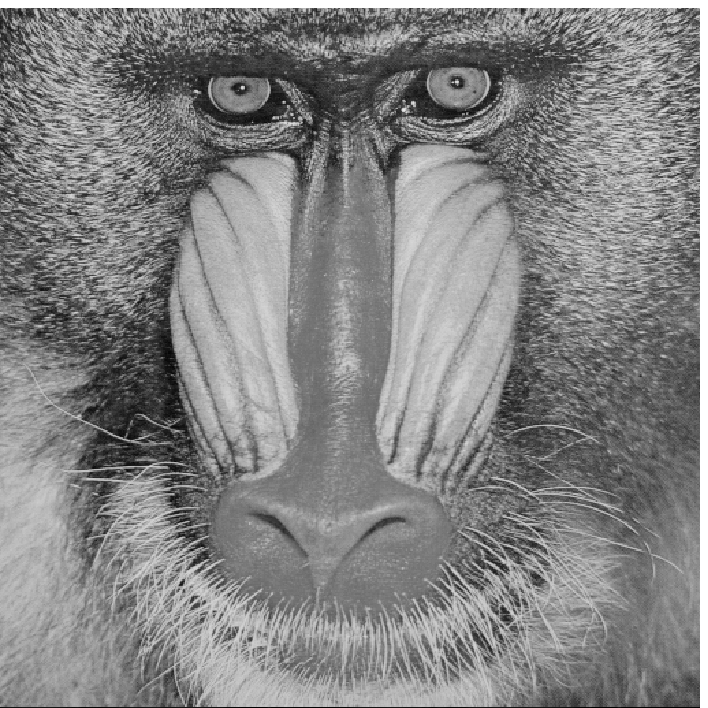}
c)
\end{minipage}
\caption{Known-plaintext attack:
a) the mask image $H$,
b) cipher-image of plain-image ``Baboon",
c) the recovered image of Fig.~\ref{figure:attack}b).}
\label{figure:attack}
\end{figure}

\subsection{Some other security defects}

In this subsection, we further report some other security defects of the image encryption algorithm under study.

\begin{itemize}
\item \textit{Low sensitivity with respect to change of plain-image}

If encryption results of an encryption algorithm are not sensitive to change of plaintext in a
significant degree, the attacker can make predictions about the plaintext from the given ciphertext only.
The desirable property of encryption algorithm is called avalanche effect in field of cryptography.
The property is even more important for image encryption algorithms since image datum and its watermarked
versions, generally slight modified versions of the original ones, are encrypted often at the same time and stored
at the same place. The avalanche effect is quantitatively measured by how every bit of ciphertext is changed when only one bit
of plaintext is modified. As for the image encryption algorithm under study, change of one bit of plain-image
can only influent one bit of the same location in the corresponding ciphertext, which disobeys expected property
of a secure encryption algorithm very far.

\item \textit{Insufficient randomness of the two used PRBSs}

The complex dynamical properties of chaos systems demonstrated in continuous domain
make they are considered as good generation source of PRBS. However, chaos systems can not
do the expected things in general since their dynamical properties will definitely degenerate in
digital domain, where everything is stored and calculated in limited
precision \cite{Li:DPWLCM:IJBC2005}. In addition, some chaos systems need numerical approximation, the used simulation methods and related parameters will
have different influences on degradation of the dynamical properties of the chaos system also. As for the image encryption algorithm
under study, trajectory of Lorenz system is continuous, which means that any two consecutive
simulated states are always correlated in a high degree. As a consequence, the bits derived from neighboring
states will also be correlated closely. Furthermore, the smaller the step length $h$ is, the stronger the
correlation will be \cite{Li:AttackingBitshiftXOR2007}. Meanwhile, the value of $h$ should be small enough since the accumulated error
of the fourth-order Runge-Kutta method is $O(h^4)$. So, step length $h$ should not be used as a sub-key since its valid
scope can not be defined clearly. To verify this point, we employed the NIST statistical
test suite \cite{Rukhin:TestPRNG:NIST} to test the randomness of 100 binary sequences of length $512\times 512 \times 8= 2,097,152$ bits (the number of bits in $\{\overline{w'_k}\}$ used for the encryption of $512\times 512$ gray-scale images). Note that the 100 binary sequences are generated by
randomly selected initial conditions $(x_0^*, y_0^*, z_0^*)$ and a given step length $h$. For each test, the default significance level 0.01 was adopted.
The passing rates in terms of binary matrix rank test, calculating the rank of disjoint sub-matrices of the entire sequence,
under different values of step length $h$ are shown in Fig.~\ref{figure:rank}, which agree with
the estimation. We found that the PRBSs generated by the method used in the image encryption algorithm under study
can not pass most tests in the test suite even for the step length $h$ under which the rank test can be passed.
When $h=0.1$, the test results are shown in Table~\ref{table:test}, from which one can see that the
used pseudo-random bit generator is not a good source of PRBS.

\begin{figure}[!htb]
\begin{minipage}{\figwidth}
\includegraphics[width=1.8\textwidth]{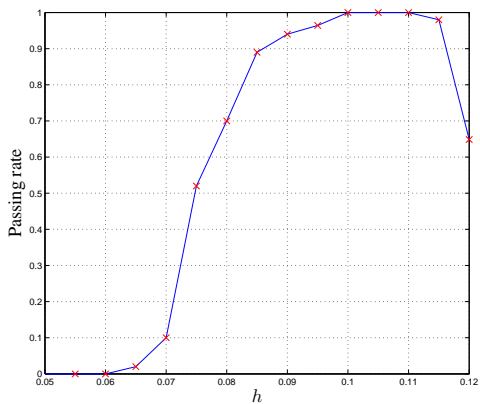}
\end{minipage}
\caption{Passing rate of the rank test under different values of step length $h$.}
\label{figure:rank}
\end{figure}

\begin{table}[!htbp]
\centering\caption{The performed tests with respect to a
significance level 0.01 and the number of sequences passing each
test in 100 randomly generated sequences.}
\begin{tabular}{p{3.85cm}|c}
\hline
{Name of Test}     & {Number of Passed Sequences}               \\
\hline\hline Frequency                                      & 93\\
\hline Block Frequency ($m=128$)                            & 100\\
\hline Cumulative Sums-Forward                              & 94\\
\hline Runs                                                 & 0\\
\hline Rank                                                 & 100\\
\hline Serial ($m=16$)                                      & 0\\
\hline Spectral Test                                        & 0\\
\hline Random Excursions(x=1)                               & 0\\
\hline Approximate Entropy ($m=10$)                         & 0\\
\hline Longest Runs of Ones (m=10000)                       & 0\\
\hline Non-overlapping Template ($m=9$, $B=000000001$)      & 0\\
\hline
\end{tabular}
\label{table:test}
\end{table}
\end{itemize}

\section{Conclusion}

In this paper, the security of a chaotic image encryption algorithm based on perceptron model has been investigated in detail.
The seeming complex encryption algorithm was proved to be equivalent to a stream cipher essentially, and so it can be broken with
only one pair of known-plaintext/ciphretext. In addition, some other security defects of the encryption algorithm, including
insensitivity with respect to change of plain-image and low randomness of PRBS used are also reported. In all,
this cryptanalysis paper shows us again that security of an encryption algorithm should be mainly built on good essential
structure of the whole encryption operations instead of so-called complex theories.

\begin{acknowledgements}
This research was supported by the National Natural Science Foundation of China (No. 61100216), Scientific Research Fund of Hunan Provincial Education Department (No. 11B124), and Ningbo Natural Science Foundation (No. 2011A610194).
\end{acknowledgements}

\bibliographystyle{spmpsci}
\bibliography{Perceptron}
\end{document}